\numberwithin{equation}{section}
\newcommand{\ndash}{\nobreakdash-\hspace{0pt}}
\newcommand{\ii}{{\mathrm{i}}}
\newcommand{\dd}{{\mathrm{d}}}
\newcommand{\EE}{\mathrm{e}}
\newtheorem{Thm}{Theorem}[section]
\newtheorem{Prop}[Thm]{Proposition}
\newtheorem{Lem}[Thm]{Lemma}
\newtheorem*{Thm*}{Theorem}
\newtheorem*{Lem*}{Lemma}
\theoremstyle{remark}
\newtheorem{Rem}[Thm]{Remark}
\newtheorem*{Ack}{Acknowledgment}
\theoremstyle{definition}
\newtheorem{Def}[Thm]{Definition}
\newcommand{\bbR}{{\mathbb{R}}}
\newcommand{\de}{\partial}
\newcommand{\bv}{{\boldsymbol{v}}}
\newcommand{\balpha}{\boldsymbol{\alpha}}
\newcommand{\be}{{\boldsymbol{e}}}
\newcommand{\bsigma}{\boldsymbol{\sigma}}
\newcommand{\bomega}{{\boldsymbol{\omega}}}
\newcommand{\calL}{\mathcal{L}}
\newcommand{\calV}{\mathcal{V}}
\def\gpd{\,\lower1pt\hbox{$\longrightarrow$}\hskip-.24in\raise2pt
               \hbox{$\longrightarrow$}\,}
\let\Hat=\widehat
\begin{document}
\title{4D Palatini--Cartan Gravity in Hamiltonian Form}

\begin{abstract}
In this note the Hamiltonian formulation of four-dimensional gravity, in the Palatini--Cartan formalism, is recovered by elimination of an auxiliary field appearing as part of the connection.
\end{abstract}

\author{Alberto S. Cattaneo}
\address{Institut f\"ur Mathematik, Universit\"at Z\"urich\\
Winterthurerstrasse 190, CH-8057 Z\"urich, Switzerland}  
\email{cattaneo@math.uzh.ch}
\author{Giovanni Canepa}
\address{INFN, sezione di Firenze, via Sansone 1, 50019 Sesto Fiorentino (FI), Italy}
\email{giovanni.canepa.math@gmail.com}

\thanks{ASC acknowledges partial support of the SNF Grant No.\ 200020\_192080 and of the Simons Collaboration on Global Categorical Symmetries. This research was (partly) supported by the NCCR SwissMAP, funded by the Swiss National Science Foundation, and is based upon work from COST Action 21109 CaLISTA, supported by COST (European Cooperation in Science and Technology) (\href{www.cost.eu}{www.cost.eu}), MSCA-2021-SE-01-101086123 CaLIGOLA, and MSCA-DN CaLiForNIA - 101119552. GC acknowledges partial support of the SNF Grants No.\ P5R5PT\_222221 and No.\ P500PT\_203085.}

\maketitle

\setcounter{tocdepth}{2}
\tableofcontents

\allowdisplaybreaks

\section{Introduction}
In the Hamiltonian formalism, Palatini--Cartan (PC) gravity in four dimensions consists of a theory with vanishing evolution Hamiltonian and several constraints---generating internal gauge transformations and diffeomorphisms, both tangent and transversal to the Cauchy surface---see \cite{PhysRevD.58.124029,Alexandrov_2000,BarrosESa2001,Date:2008rb,Rezende:2009sv,Bodendorfer_2013, CS2019, CCS2020, MERC:2019, MRC:2019, ACOTZ00} and references therein. The goal of this note is to recover this not from a study of the constraints that arise in the boundary analysis, but via the elimination of an auxiliary field that happens to be present in PC gravity (with some assumptions essentially requiring that we are working on a globally hyperbolic space--time).

To start with, we recall the
PC action
\[
S[\be,\bomega]=\int_M \left(\frac{1}{2} \be^{2} F_{\bomega}  + \frac\Lambda{4!} \be^4\right),
\]
where $\be$ is a coframe (a.k.a.\ a tetrad or a vierbein) and $\bomega$ is a metric connection, which is regarded as an independent field. The parameter $\Lambda$ is the cosmological constant.

We focus on a space--time manifold $M$ of the form $\Sigma\times I$, where $\Sigma$ is a $3$\ndash manifold, possibly with boundary, and $I$ is an interval. Under the assumption that the metric induced on $\Sigma\times\{t\}$ by the coframe is positive\ndash definite $\forall t\in I$,\footnote{Our result actually extends to the weaker assumption that the induced metrics are nondegenerate.} we show that a part of the connection plays the role of an auxiliary field that is set to zero by its own Euler--Lagrangian (EL) equations. As a consequence, we conclude that PC theory is classically equivalent to a Hamiltonian system for fields on $\Sigma$ (the restriction of $\be$ and part of the restriction of $\bomega$) with zero Hamiltonian and several first\ndash class constraints (the torsion\ndash free condition and the constraints known as the momentum and Hamiltonian constraint).

If $\Sigma$ has a boundary, then there is a de facto Hamiltonian, which corresponds to the ADM mass.

All this fits nicely with the study of the constraints of PC theory (see \cite{CS2019,CCS2020} and references therein),  as well as with the analogue study in Einstein--Hilbert gravity (see \cite[Chapter VIII]{MR3379262} and references therein).

In the companion paper \cite{CC25a}, we extend this study to the BV setting and also to higher dimensions, proving that the BV PC theory is strongly equivalent (via a BV pushforward \cite{CMR2015}) to the AKSZ \cite{AKSZ} theory obtained in \cite{CCS2020b} from the BFV description of the reduced phase space of PC gravity.

\begin{Ack}
    We thank Simone Speziale for very useful comments.
\end{Ack}

\section{Classical Palatini--Cartan formalism in four dimensions}
In the Palatini--Cartan (PC) formulation of gravity, the fundamental ingredients are a coframe (a.k.a.\ a tetrad or a vierbein) and a connection, instead of a metric as in the Einstein--Hilbert formulation. The first idea (Cartan) is to use  bases, declared to be orthonormal, instead of metrics. A basis contains however more information than a metric because we can act on it by orthogonal transformations leaving the metric unchanged. This introduces a gauge freedom, so the PC theory turns out to be a gauge theory, and one has to introduce a metric connection,\footnote{By this we mean a connection for the orthogonal frame bundle $PM$. Using the coframe, we can then associate to it an affine connection, which is then automatically compatible with the induced metric.} which is treated as independent field (as in Palatini's formulation of general relativity). It is only via some equations of motion that the connection is forced to correspond to the Levi-Civita connection for the induced metric on space--time.

We denote the coframe by $\be$ and the connection by $\bomega$. The PC action (in four dimensions) reads
\begin{equation}\label{e:S}
S[\be,\bomega]=\int_M \left(\frac{1}{2} \be^{2} F_{\bomega}  + \frac\Lambda{4!} \be^4\right),
\end{equation}
where $\Lambda$ (the cosmological constant) is a fixed real number. Let us explain our notations, both in the intrinsic form, which we will use throughout, and in their local coordinate expression. We start with the latter.

\subsection{Local coordinates}
In local coordinates, the coframe $\be$ has components $e_\mu^a$ with $\mu$ and $a$ ranging from $1$ to $4$. The matrix $(e_\mu^a)$ is required to be invertible.
The Greek index is understood as a space--time index, whereas the Latin index as an internal one. The components of the connection $\bomega$ are $\omega_\mu^{ab}$ with the condition $\omega_\mu^{ba}=-\omega_\mu^{ab}$. 
The curvature $F_\bomega$ of $\bomega$ has components $F_{\mu\nu}^{ab}$ and is antisymmetric both in the lower and in the upper indices. It is defined as
\[
F_{\mu\nu}^{ab} = \de_{[\mu}\omega_{\nu]}^{ab} + 
\omega_{[\mu}^{ac}\omega_{\nu]}^{db}\eta_{cd},
\]
where $\eta$ is the Minkowski metric. 
Upon choosing an orientation for $M$, the action $\eqref{e:S}$ then reads
\[
S[\be,\bomega]=\int_M \epsilon^{\mu\nu\rho\sigma}\epsilon_{abcd}
\left(\frac{1}{2} e_\mu^ae_\nu^b F_{\rho\sigma}^{cd}  + \frac\Lambda{4!} e_\mu^ae_\nu^b e_\rho^ce_\sigma^d\right)
\;d^4x.
\]

\subsection{Intrinsic formulation}
We start choosing a vector bundle $\calV$ over $M$ isomorphic to the tangent bundle $TM$ ($\calV$ is often called the ``fake tangent bundle'') endowed with a fiberwise Minkowski metric $\eta$.

Equivalently, we may choose a reference Lorentzian metric on $M$ and consider the related orthonormal frame bundle $PM$. Then we define $\calV$ as the associated bundle $PM\times_G (V,\eta)$ with $(V,\eta)$ the Minkowski space and $G=SO(3,1)$.

A coframe is then an isomorphism $\be\colon TM\to\calV$. We denote by $\bomega$ a connection form for the orthonormal frame bundle $PM$ and by $F_\bomega$ its curvature. 
A useful notation we use throughout is 
\[
\Omega^{k,l}(M)\coloneqq \Gamma(M,\Lambda^k T^*M\otimes\Lambda^l\calV)
=\Omega^k(M)\otimes_{C^\infty(M)}\Gamma(M,\Lambda^l\calV),
\]
where $\Gamma$ denotes the space of sections, $\Lambda$ the exterior power,  and $\Omega(M)$ the space of differential forms. We call an element of $\Omega^{k,l}(M)$ a $(k,l)$\ndash form.

With this notation we see that $\be$ is a nondegenerate element of $\Omega^{1,1}(M)$. As the Lie algebra of orthogonal transformations may be identified with $\Lambda^2V$ (viz., the skew-symmetric matrices), the space of connections may be viewed as modeled on $\Omega^{1,2}(M)$, upon choice of a fixed reference connection $\bomega_0$. Finally, the curvature $F_\bomega$ may be viewed as an element of $\Omega^{2,2}(M)$. In summary,
\begin{align*}
\be &\in \Omega_\text{nd}^{1,1}(M),\\
\bomega &\in \bomega_0 + \Omega^{1,2}(M),\\
\intertext{and}
F_\bomega &\in \Omega^{2,2}(M),
\end{align*}
where the index nd stands for nondegenerate.

The argument of the integral in \eqref{e:S} is the $(4,4)$\ndash form obtained by taking the (nondisplayed) wedge products both in $\Lambda^\bullet T^*M$ and $\Lambda^\bullet\calV$.\footnote{Our convention for signs in the tensor product of these two algebras is by the total degree (Deligne convention). Namely, if $\alpha$ is a $(k,l)$\ndash form and $\beta$ an $(r,s)$\ndash form, we have $\alpha\beta=(-1)^{(k+l)(r+s)}\beta\alpha$.} One can show (see, e.g., 
\cite[Section 2.1]{CCS2020}) that the space
$\Omega^{4,4}(M)$ of $(4,4)$\ndash forms is canonically identified with the space of densities on $M$, so the integrand
\[
\calL = \frac{1}{2} \be^{2} F_{\bomega}  + \frac\Lambda{4!} \be^4
\]
is indeed a Lagrangian density. Note that the intrinsic formulation does not require $M$ to be orientable.

If $M$ is orientable, we may choose an orientation and orient $\calV$ via its defining isomorphism with $TM$. This yields an isomorphism of $\Lambda^4\calV$ with the trivial line bundle and of $\Omega^{4,4}(M)$ with the space of top forms on $M$. The Lagrangian density is then reinterpreted as a Lagrangian top form, and the action $S$ is defined integrating it over $M$ with the given orientation.

\subsection{Relation with the Einstein--Hilbert formalism}
A coframe $\be$ defines a metric $g$. One intrinsic way to see this is that $\be$ is a bundle map $TM\to\calV$. We then define $g$ as the pullback by $\be$ of the internal metric $\eta$: $g=\be^*\eta$. Equivalently, we have $g=(\be,\be)$ where $(\ ,\ )$ denotes the inner product defined by $\eta$. In local coordinates, we have $g_{\mu\nu}=e_\mu^ae_\nu^b\eta_{ab}$.\footnote{\label{f:G0}If we take care of physical dimensions, things are a bit more delicate. The fields $\bomega$ and $\be$ are dimensionsless. In fact $[\bomega,\bomega]$ must have the dimension of $\dd\bomega$, which implies that $\bomega$ is dimensionsless, since the exterior derivative $\dd=\dd x^\mu\frac\de{\de x^\mu}$ is so. On the other hand, the Lagrangian density must be dimensionsless, which implies that $\be$ is also so. But this means that the components $e_\mu$ in the expansion $\be=e_\mu\dd x^\mu$ have the dimension of inverse length, if we give the local coordinates $x^\mu$ the dimension of a length. On the other hand, the components $g_{\mu\nu}$ of the metric are dimensionsless, as implied by the formula $\dd s^2 = g_{\mu\nu}\dd x^\mu\dd x^\nu$. Therefore, the correct relation between the coframe and the metric is  $g_{\mu\nu}=G_0 e_\mu^ae_\nu^b\eta_{ab}$, where the constant $G_0$ has the dimension of a length squared. Up to a factor, $G_0$ can be identified with the gravitational constant.} The nondegeneracy of $\be$ is equivalent to the nondegeneracy of $g$. Also note that $g$ has the same signature as $\eta$. This shows that the space of metrics (of the given signature) on $M$ is isomorphic to the space of coframes $\Omega_\text{nd}^{1,1}(M)$ modulo orthogonal gauge transformations. Moreover, we may pullback $\bomega$ by $\be$ obtaining a metric connection $\nabla$ for $(M,g)$.

We now have to check the dynamical equivalence of the two theories. The Euler--Lagrange equations for $S$ read
\begin{align*}
\dd_\bomega(\be^2) &= 0,\\
\be F_\bomega + \frac\Lambda{3!}\be^3&=0,
\end{align*}
where $\dd_\bomega$ denotes the covariant derivative.\footnote{In local coordinates, 
$(\dd_\bomega\be)_{\mu\nu}^a=\de_{[\mu}^{\phantom{a}}e_{\nu]}^a + \omega_{[\mu}^{ab}e_{\nu]}^c\eta_{bc}$.}
By the Leibniz rule, the first equation is equivalent to $\be\dd_\bomega\be=0$, which turns out to be equivalent to $\dd_\bomega\be=0$ thanks to the nondegeneracy condition on $\be$ \cite{CCS2020}

Given $\be$, there is a unique $\bomega_\be$ solving the eqution $\dd_\bomega\be=0$. This corresponds to the induced metric connection $\nabla_g$ being torsion free, i.e., the Levi-Civita connection. If we now insert the solution $\bomega_e$ in the second Euler--Lagrange equation, and we rewrite it in terms of $g$, we get the Einstein equation (without matter and with cosmological constant $\Lambda$).

\section{PC gravity on a cylinder}\label{s:cyl}
We now specialize to the case when space--time $M$ is of the form $\Sigma\times I$, where $I=[t_0,t_1]$ is an interval and $\Sigma$ is a $3$\ndash manifold. We will interpret $I$ as a time interval, even though the results of this paper hold in more generality; in other words, we are thinking of $M$ as of a globally hyperbolic manifold. In this section, $\Sigma$ is assumed to be closed (i.e., compact without boundary). 

\newcommand\bbe{\underline\be}
\newcommand\bbomh{{\Hat{\underline\bomega}}}
\newcommand\bbomega{{\underline\bomega}}
\newcommand\ddd{\underline\dd}
\newcommand\bbv{\underline\bv}
\newcommand\bbsigma{\underline\bsigma}
\newcommand\uF{\underline F}
We denote by $t$ the coordinate on $I$ and by $\partial_\text{n}$ the vector field $\frac\de{\de t}$. (We use the index n to stress that this is the direction ``normal'' to $\Sigma$.) We call a $(k,l)$\ndash form $\underline\balpha$ horizontal if it satisfies $\iota_{\partial_\text{n}}\underline\balpha=0$, where $\iota$ denotes contraction with a vector field. We will use underlines to denote horizontal forms.

A $(k,l)$\ndash form $\balpha$ on $\Sigma\times I$ can be written uniquely as $\balpha=\alpha_\text{n}\dd t + \underline\balpha$ with $\underline\balpha$ horizontal.
\footnote{In local coordinates, $\balpha=\alpha_{\mu_1\dots\mu_k}\dd x^{\mu_1}\cdots\dd x^{\mu_k}$. If we denote by $i$ the indices from $1$ to $3$ labeling coordinates on $\Sigma$ and use the index $4$ for the coordinate on $I$, we have $\underline\balpha=\alpha_{i_1\dots i_k}\dd x^{i_1}\cdots\dd x^{i_k}$ and $\alpha_\text{n}=\alpha_{i_1\dots i_{k-1}4}\dd x^{i_1}\cdots\dd x^{i_{k-1}}$.} 
Note that $\alpha_\text{n}$ is a $(k-1,l)$\ndash form, whereas $\underline\balpha$ is still a $(k,l)$\ndash form. 
For the fields $\be$ and $\bomega$, we denote the decomposition by\footnote{The notation $\bbomh$ with a hat over $\bbomega$ is for later convenience.}
\begin{subequations}\label{e:beebbomega}
\begin{align}
\be &= e_\text{n}\dd t + \bbe,\\
\bomega &= \omega_\text{n}\dd t + \bbomh.
\end{align}
\end{subequations}
We may then rewrite the action as
\begin{equation}\label{e:Sintermdiate}
S = \int_M \left(
e_\mathrm{n} \left(\bbe\, \uF_{\bbomh} + \frac\Lambda{3!}\bbe^3\right)
+\frac12\bbe^2\left(-\de_\mathrm{n}\bbomh+\ddd_\bbomh\omega_\mathrm{n}\right)
\right)\dd t,
\end{equation}
where we have used the decomposition
\[
F_\bomega= \uF_{\bbomh} + \left(-\de_\mathrm{n}\bbomh+\ddd_\bbomh\omega_\mathrm{n}\right)
\dd t.
\]
Here $\ddd_\bbomh$ denotes the horizontal covariant derivative with respect to the connection $\bbomh$, which only has horizontal components; similarly, we denote by by
$\ddd$ the horizontal differential.\footnote{\label{fn:locdiff}In local coordinates, 
this simply means we only use Latin (space) indices. For example,
$(\ddd_\bbomh\omega_\mathrm{n})_{i}^{ab}=\de_i\omega_\mathrm{n}^{ab}+\omega_i^{ac} \omega_\mathrm{n}^{db}\eta_{cd}-\omega_i^{bc} \omega_\mathrm{n}^{da}\eta_{cd}$.} 

\begin{Rem}[The degeneracy problem]
The term $\frac12\bbe^2\de_\mathrm{n}\bbomh$ is the analogue of the $p\dot q$ term in the Hamiltonian version of an action functional. It has however a problem: it is invariant under certain changes in $\bbomh$, which leads to a degeneracy (this is intimately related to the degeneracy of the boundary structure observed in \cite{CS2019}). Namely, if we vary $\bbomh$ by a horizontal $(1,2)$\ndash form $\bbv$, we get a term $\frac12\bbe^2\de_\mathrm{n}\bbv$, which can be rewritten as $\frac12\bbe\de_\mathrm{n}(\bbe\,\bbv)-\frac12\bbe\,\bbv\de_\mathrm{n}\bbe$. If $\bbv$ satisfies $\bbe\,\bbv=0$, which actually admits nontrivial solutions, then $\frac12\bbe^2\de_\mathrm{n}\bbv$ vanishes. Below we show how to deal with this problem, essentially following \cite{CCS2020} (see also \cite[Section 7]{CPencyPC} for a shorter review).
\end{Rem}

First we pick a $(0,1)$\ndash form $\epsilon_\text{n}$ normalized by $(\epsilon_\text{n},\epsilon_\text{n})=-1$. We can view $\epsilon_\text{n}$ as a reference time-like basis vector for the internal Minkowski space $(V,\eta)$. {}From now on we require the field $\bbe$ to satisfy the additional nondegeneracy condition
\begin{equation}\label{e:e3ennot0}
\bbe^3\epsilon_\text{n}\not=0
\end{equation}
everywhere. This corresponds to saying that at each point  the three components of $\bbe$ together with $\epsilon_\text{n}$ form a basis of the fiber of $\calV$. Therefore, we have a uniquely determined, nowhere vanishing function $\mu$ and a uniquely determined  horizontal vector field $Z$ (i.e., $\iota_Z\dd t =0$) such that
\begin{equation}\label{e:en}
e_\text{n} = \mu\epsilon_\text{n}+\iota_Z\bbe.
\end{equation}

To proceed, we now have to make the crucial assumption introduced in \cite{CCS2020}. We denote by $g^\Sigma$, at each value of $t\in I$, the 
restriction to $\Sigma\times\{t\}$ of the metric $g$ defined by $\be$ (equivalently, $g^\Sigma$ is the 
tensor defined 
by pulling back the internal metric $\eta$ via $\bbe$).\footnote{In local coordinates, $g^\Sigma_{ij} =\underline e_i^a\underline e_j^b\eta_{ab}$.}

\begin{Def}\label{d:metnondeg}
We say that $\bbe$ is metric nondegenerate if it satisfies \eqref{e:e3ennot0} and if
$g^\Sigma$ is positive\ndash definite, hence a Riemannian metric, for every $t\in I$.
\end{Def}
\begin{Rem} 
The condition amounts to saying that $M=\Sigma\times I$ is globally hyperbolic. For all the considerations in this paper, we could actually use the weaker condition that $g^\Sigma$ be nondegenerate, but not necessarily positive definite.\footnote{One mild advantage is that the condition that $g^\Sigma$ be positive-definite together with the time-like normalization $(\epsilon_n,\epsilon_n)=-1$ implies the nondegeneracy condition \eqref{e:e3ennot0}, which then need not be imposed separately.}
\end{Rem}

Under the assumption that $\bbe$ is metric nondegenerate, it is proven in \cite[Section 4.1]{CCS2020} that there is a unique decomposition\footnote{For a related analysis in the Dirac formalism, see \cite{BarrosESa2001,Alexandrov_2000}.}
\begin{equation}\label{e:bom}
\bbomh = \bbomega + \bbv
\end{equation}
by a uniquely determined horizontal connection $\bbomega$ and a uniquely determined horizontal $(1,2)$\ndash form $\bbv$ satisfying
\begin{subequations}\label{e:structev}
\begin{align}
\epsilon_\text{n}\,\ddd_\bbomega\bbe &= \bbe\,\bbsigma,\label{e:struct}\\
\bbe\,\bbv &=0,\label{e:ev}
\end{align}
\end{subequations}
where $\ddd_\bbomega$ denotes again the horizontal covariant derivative,\footnote{In local coordinates, as in footnote~\ref {fn:locdiff},
$(\ddd_\bbomega\bbe)_{ij}^a=\de_{[i}^{\phantom{a}}\underline e_{j]}^a + \underline\omega_{[i}^{ab}\underline e_{j]}^c\eta_{bc}$.} and $\bbsigma$ is a horizontal $(1,1)$\ndash form. Equation \eqref{e:struct} is called the structural constraint in \cite{CCS2020}.
Finally we introduce the $(0,2)$\ndash form
\begin{equation}\label{e:w}
w \coloneqq -\omega_\text{n} - \iota_Z\bbv.
\end{equation}
We can summarize the first part of this section as the description of a change of variables from $(\be,\bomega)$, with $\be$ metric nondegenerate, to
$(\mu,Z,w,\bbe,\bbomega,\bbv)$.
\begin{Prop}\label{t:Sauxham}
We have
\[
S[\be,\bomega] = S_\mathrm{Ham}[\mu,Z,w,\bbe,\bbomega] + S_\mathrm{aux}[\mu,Z,\bbe,\bbv]
\]
with
\[
S_\mathrm{Ham}= \int_M \left(
-\frac12 \bbe^2\de_\mathrm{n}\bbomega
+ (\mu\epsilon_\mathrm{n}+\iota_Z\bbe) \left(\bbe\, \uF_{\bbomega} + \frac\Lambda{3!}\bbe^3\right)
+ w \bbe\,\ddd_\bbomega\bbe
\right)\dd t
\]
and
\[
S_\mathrm{aux}= \int_M \frac12 (\mu\epsilon_\mathrm{n}+\iota_Z\bbe)\bbe[\bbv,\bbv].
\]
\end{Prop}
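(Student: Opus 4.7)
The plan is to expand $S$ from \eqref{e:Sintermdiate} after substituting $e_\mathrm{n}=\mu\epsilon_\mathrm{n}+\iota_Z\bbe$, $\bbomh=\bbomega+\bbv$, and $\omega_\mathrm{n}=-w-\iota_Z\bbv$, and then to show that the result equals $S_\mathrm{Ham}+S_\mathrm{aux}$ by collecting the manifestly matching pieces and verifying that all leftover terms vanish. Using the standard decompositions $\uF_{\bbomega+\bbv}=\uF_\bbomega+\ddd_\bbomega\bbv+\frac12[\bbv,\bbv]$ and $\ddd_{\bbomega+\bbv}\omega_\mathrm{n}=\ddd_\bbomega\omega_\mathrm{n}+[\bbv,\omega_\mathrm{n}]$, the piece $\frac12 e_\mathrm{n}\bbe[\bbv,\bbv]$ of $e_\mathrm{n}\bbe\uF_\bbomh$ produces $S_\mathrm{aux}$ immediately, while $e_\mathrm{n}\bbe\uF_\bbomega$, $e_\mathrm{n}\frac{\Lambda}{3!}\bbe^3$, and $-\frac12\bbe^2\de_\mathrm{n}\bbomega$ give the corresponding integrands of $S_\mathrm{Ham}$.

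The subsequent cancellations all rely on the constraint $\bbe\,\bbv=0$, which in the Deligne convention yields by Leibniz the three identities $\bbe\,\ddd_\bbomega\bbv=-\ddd_\bbomega\bbe\,\bbv$, $\bbe^2\de_\mathrm{n}\bbv=0$ (obtained by combining $\de_\mathrm{n}(\bbe\,\bbv)=0$ with $\de_\mathrm{n}(\bbe^2\,\bbv)=0$), and $\bbe\,(\iota_Z\bbv)=-(\iota_Z\bbe)\,\bbv$. The first converts the $\bbe\,\ddd_\bbomega\bbv$ piece of $e_\mathrm{n}\bbe\uF_\bbomh$ into $-e_\mathrm{n}\,\ddd_\bbomega\bbe\,\bbv$, after which the structural constraint $\epsilon_\mathrm{n}\,\ddd_\bbomega\bbe=\bbe\,\bbsigma$ kills the $\mu\epsilon_\mathrm{n}$ part as $\mu\,\bbsigma\,(\bbe\,\bbv)=0$. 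The second makes $-\frac12\bbe^2\de_\mathrm{n}\bbv$ drop out. The third, together with Deligne commutativity, gives the sign identity $(\iota_Z\bbv)\,\bbe\,\ddd_\bbomega\bbe=(\iota_Z\bbe)\,\ddd_\bbomega\bbe\,\bbv$, needed in the next step.

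Next I would integrate by parts on the closed $3$\ndash manifold $\Sigma$ to rewrite $\frac12\bbe^2\,\ddd_\bbomega\omega_\mathrm{n}$. Since the $SO$\ndash action on $\Lambda^4\calV$ is trivial, one has $\int_\Sigma\ddd_\bbomega(\bbe^2\,X)=0$ for any $(0,2)$\ndash form $X$, so Leibniz rewrites $-\frac12\bbe^2\,\ddd_\bbomega X$ as $X\,\bbe\,\ddd_\bbomega\bbe$. Taking $X=w$ produces the $w\,\bbe\,\ddd_\bbomega\bbe$ term of $S_\mathrm{Ham}$, while taking $X=\iota_Z\bbv$ produces $(\iota_Z\bbv)\,\bbe\,\ddd_\bbomega\bbe$, which by the identity of the previous paragraph equals $(\iota_Z\bbe)\,\ddd_\bbomega\bbe\,\bbv$ and cancels the surviving leftover from paragraph~2.

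The main obstacle is the residue $\frac12\bbe^2[\bbv,\omega_\mathrm{n}]$ arising from the $[\bbv,\omega_\mathrm{n}]$ part of $\ddd_\bbomh\omega_\mathrm{n}$. By $\mathrm{Ad}$\ndash invariance of the wedge pairing on $\Lambda^4\calV$, namely $[\bbv,\bbe^2]\,\omega_\mathrm{n}+\bbe^2[\bbv,\omega_\mathrm{n}]=0$, together with the Leibniz identity $[\bbv,\bbe^2]=2\bbe[\bbv,\bbe]$, this reduces to $-\omega_\mathrm{n}\,\bbe[\bbv,\bbe]$. Closing the proof therefore amounts to showing the identity $\bbe[\bbv,\bbe]=0$. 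This follows from the cyclic consequence of $\bbe\,\bbv=0$, namely $\bbe^a\bbv^{bc}+\bbe^b\bbv^{ca}+\bbe^c\bbv^{ab}=0$ in any internal orthonormal frame: a short computation shows $\bbe\wedge[\bbv,\bbe]^{[af]}\propto\bbv^{af}\,\eta_{cd}\,\bbe^c\bbe^d$, which vanishes because $\eta$ is symmetric while $\bbe^c\bbe^d$ is antisymmetric.
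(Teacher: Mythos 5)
Your proposal is correct and follows essentially the same route as the paper: the same change of variables, the same use of \eqref{e:struct} and \eqref{e:ev} to dispose of the $\ddd_\bbomega\bbv$ and $\de_\mathrm{n}\bbv$ terms, and the same integration by parts on the closed $\Sigma$ to assemble $w\,\bbe\,\ddd_\bbomega\bbe$. The only (harmless) reorganization is that you split $\ddd_\bbomh\omega_\mathrm{n}$ into $\ddd_\bbomega\omega_\mathrm{n}+[\bbv,\omega_\mathrm{n}]$ \emph{before} integrating by parts, so the key identity $\bbe\,[\bbv,\bbe]=0$ --- which the paper uses in the equivalent form $\bbe\,\bbv\cdot\bbe=0$ and outsources to \cite{CCS2020} --- surfaces for you as the vanishing of the residue $\tfrac12\bbe^2[\bbv,\omega_\mathrm{n}]$, and you supply a correct in-line derivation of it from the cyclic consequence of $\bbe\,\bbv=0$.
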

In the above expressions we have used the following notations: $\de_\mathrm{n}$ is a shorthand notation for $\frac\de{\de t}$, $\uF_{\bbomega}$ is the horizontal part of the curvature of $\bbomega$, and $[\ ,\ ]$ denotes the Lie bracket in the orthogonal Lie algebra.
\begin{proof}
We start from \eqref{e:Sintermdiate}.
By \eqref{e:bom} we have
\[
\uF_{\bbomh} = \uF_{\bbomega} + \ddd_\bbomega\bbv + \frac12 [\bbv,\bbv],
\]
so
\[
S = \int_M \left(
e_\mathrm{n} \left(\bbe\, \uF_{\bbomega} + \bbe\,\ddd_\bbomega\bbv + \bbe\frac12 [\bbv,\bbv]+ \frac\Lambda{3!}\bbe^3\right)
+\frac12\bbe^2\left(-\de_\mathrm{n}\bbomh+\ddd_\bbomh\omega_\mathrm{n}\right)
\right)\dd t.
\]

We will now proceed to rewrite the single terms using \eqref{e:en}, \eqref{e:bom}, \eqref{e:structev}, and \eqref{e:w}.
{}From \eqref{e:ev} we get $\bbe\,\ddd_\bbomega\bbv=-\ddd_\bbomega\bbe\,\bbv$, so 
\[
e_\mathrm{n}\bbe\,\ddd_\bbomega\bbv = -(\mu\epsilon_\mathrm{n}+\iota_Z\bbe)\ddd_\bbomega\bbe\,\bbv=
-\mu\bbe\,\bbsigma\,\bbv-\iota_Z\bbe\,\ddd_\bbomega\bbe\,\bbv = -\iota_Z\bbe\,\ddd_\bbomega\bbe\,\bbv,
\]
where we have used again \eqref{e:struct} and \eqref{e:ev}.
{}From \eqref{e:ev} we get $\iota_Z\bbe\,\bbv=-\bbe\iota_Z\bbv$, so we have
\[
e_\mathrm{n}\bbe\,\ddd_\bbomega\bbv =-\iota_Z\bbv\,\bbe\,\ddd_\bbomega\bbe.
\]
The next term we consider is
\[
\bbe^2 \de_\mathrm{n}\bbomh= \bbe^2 \de_\mathrm{n}\bbomega + \bbe^2 \de_\mathrm{n}\bbv.
\]
{}From \eqref{e:ev} we get $\bbe\de_\mathrm{n}\bbv=-\de_\mathrm{n}\bbe\,\bbv$, so the last terms becomes
$-\be\de_\mathrm{n}\bbe\,\bbv$, which vanishes by \eqref{e:ev}.

Finally, since $\Sigma$ is closed, we have
\begin{equation}\label{e:bypartsSigmaclosed}
0 = \int_\Sigma \ddd(\bbe^2\omega_\mathrm{n}) = 
\int_\Sigma \ddd_\bbomh\bbe^2\,\omega_\mathrm{n}
+ \int_\Sigma \bbe^2\,\ddd_\bbomh\omega_\mathrm{n}.
\end{equation}
However, we have
\[
\frac12 \ddd_\bbomh\bbe^2=\bbe\,\ddd_\bbomh\bbe=
\bbe\,\ddd_\bbomega\bbe
+\bbe\,\bbv\cdot\bbe.
\]
A simple computation, see \cite{CCS2020}, shows that \eqref{e:ev} implies $\bbe\,\bbv\cdot\bbe=0$. Therefore, upon integrating by parts, we may replace $\frac12\bbe^2\ddd_\bbomh\omega_\mathrm{n}$ with 
$-\omega_\mathrm{n}\bbe\,\ddd_\bbomega\bbe$.

Collecting all the terms and expanding $e_\mathrm{n}$ as in \eqref{e:en} complete the proof.
\end{proof}

\subsection{Getting rid of the auxiliary term}\label{s:aux}
The field $\bbv$ only appears in the functional $S_\mathrm{aux}$, in which it enters quadratically and with no derivatives. 
The Euler--Lagrange equation for $\bbv$ is therefore a homogeneous linear equation. We claim that the only solution is $\bbv=0$. This immediately follows from the
\begin{Lem}\label{l:auxnd}
If $\bbe$ is metric nondegenerate, then $S_\mathrm{aux}$, viewed as a quadratic form in $\bbv$, is nondegenerate.
\end{Lem}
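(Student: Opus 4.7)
The plan is to reduce the claim to a purely algebraic, pointwise statement and then verify it in an adapted frame supplied by metric nondegeneracy. The functional $S_\mathrm{aux}$ contains no derivatives of $\bbv$, so the Euler--Lagrange equation for $\bbv$ is an algebraic one, and nondegeneracy of $S_\mathrm{aux}$ as a quadratic form on $\{\bbv \in \Omega^{1,2}_\mathrm{h}(M) : \bbe\,\bbv = 0\}$ is equivalent to the following fiberwise statement at each $p = (x,t) \in M$: the symmetric bilinear form
\[
B_p(u,v) := \tfrac12\, e_\mathrm{n}(p)\,\bbe(p)\,[u,v]
\]
is nondegenerate on $W_p := \{v \in T^*_p\Sigma \otimes \Lambda^2 V : \bbe(p)\wedge v = 0\}$. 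By \cite[Section 4.1]{CCS2020} (and as already used in the decomposition \eqref{e:bom}), $W_p$ is exactly the space in which $\bbv(p)$ is allowed to live.

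First I would fix a convenient frame using the metric nondegeneracy assumption. Since $g^\Sigma$ is positive definite and $(\epsilon_\mathrm{n},\epsilon_\mathrm{n})=-1$, at each $p$ one can choose an orthonormal basis $(\epsilon_\mathrm{n}, f_1, f_2, f_3)$ of $V_p$ together with a basis of $T^*_p\Sigma$ so that $\bbe(p)^a_i = \delta^a_i$ for $a,i\in\{1,2,3\}$ and $\bbe(p)^0_i = 0$. In this frame the nondegeneracy condition \eqref{e:e3ennot0} forces $\mu \neq 0$. Writing a general $v \in T^*_p\Sigma \otimes \Lambda^2 V$ in components $v^{ab}_i$, the constraint $\bbe\wedge v = 0$ reduces to an explicit system of linear equations that cuts out a $6$\ndash dimensional subspace $W_p$ of the $18$\ndash dimensional ambient space; a convenient basis of $W_p$ can be produced by splitting $v$ into its ``boost'' part $v^{0i}_j$ and its ``rotational'' part $v^{jk}_i$, and solving the constraint for the latter in terms of the former (or vice versa).

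Finally I would compute $B_p$ on this basis, expanding $[u,v]$ by means of the $\frso(V,\eta)$-bracket $[f_a\wedge f_b,f_c\wedge f_d] = \eta_{bc}\,f_a\wedge f_d - \eta_{ac}\,f_b\wedge f_d - \eta_{bd}\,f_a\wedge f_c + \eta_{ad}\,f_b\wedge f_c$, and using $e_\mathrm{n} = \mu\epsilon_\mathrm{n}+\iota_Z\bbe$. The part of $B_p$ coming from $\iota_Z\bbe$ can be simplified using the constraint $\bbe\,v=0$ (together with its consequence $\bbe\,v\cdot\bbe=0$ quoted from \cite{CCS2020}), leaving the $\mu\epsilon_\mathrm{n}$ piece as the dominant contribution. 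The resulting $6 \times 6$ Gram matrix has determinant proportional to a nonzero power of $\mu$ times the Riemannian volume of $g^\Sigma$ and is therefore invertible, giving nondegeneracy. The main obstacle is precisely this computation: the bookkeeping of the $\Lambda^2 V$-bracket and of the constraint is delicate. A cleaner route, if available, would be to exhibit $B_p$ as a manifestly definite bilinear form by a suitable change of variables, so that nondegeneracy follows from positive definiteness of $g^\Sigma$ without an explicit determinant computation.
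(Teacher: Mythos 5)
Your overall strategy coincides with the paper's: reduce to a pointwise algebraic statement (legitimate, since $S_\mathrm{aux}$ contains no derivatives of $\bbv$), normalize the coframe to $\underline e_i^a=\delta_i^a$ using metric nondegeneracy and an internal gauge transformation, use the constraint $\bbe\,\bbv=0$ to cut the $18$\ndash dimensional ambient space down to a $6$\ndash dimensional one, and then inspect the restricted quadratic form. Your remark that the $\iota_Z\bbe$ contribution can be discarded is also correct: in the adapted frame the constraint forces $v_i^{a4}=0$, hence $[\bbv,\bbv]_{jk}^{a4}=0$, which is precisely the block that $\iota_Z\bbe\,\bbe\,[\bbv,\bbv]$ couples to; this is why the paper may simply normalize the full coframe to $e_\mu^a=\delta_\mu^a$. (A minor inaccuracy: the constraint does not let you solve the ``rotational'' components in terms of the ``boost'' ones or vice versa --- it kills all nine boost components $v_i^{a4}$ outright and imposes three further relations, namely \eqref{e:vrelations}, among the nine rotational ones.)

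The genuine gap is that the decisive step is not performed but only asserted. The claim that ``the resulting $6\times 6$ Gram matrix has determinant proportional to a nonzero power of $\mu$ times the Riemannian volume of $g^\Sigma$'' is exactly the content of the lemma, and nothing in your argument establishes it; a quadratic form can perfectly well become degenerate after restriction to a subspace, so the nondegeneracy of the restricted form must be checked by hand. Worse, the ``cleaner route'' you propose as a fallback --- exhibiting the form as manifestly definite so that nondegeneracy follows from positivity of $g^\Sigma$ --- is unavailable: carrying out the computation (as the paper does) yields, in suitable coordinates on the constraint space, $L_\mathrm{aux}=xy-xz+yz-2(\alpha^2+\beta^2+\gamma^2)$, which has signature $(2,4)$ and is therefore nondegenerate but \emph{indefinite}; the paper relies on this indefiniteness later, when it observes that the Euclidean functional integral $\int \EE^{-\frac1\hbar S_\mathrm{aux}}\,D\bbv$ is ill defined. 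So the explicit evaluation of $e_\mathrm{n}\bbe[\bbv,\bbv]$ on the six independent components $v_1^{12},v_1^{13},v_1^{23},v_2^{12},v_2^{13},v_3^{12}$ cannot be bypassed, and without it the proof is incomplete.
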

\begin{proof}
It is convenient to rewrite $S_\mathrm{aux}$ as $\int_M \frac12 e_\mathrm{n}\bbe[\bbv,\bbv]$. Since no derivative of $\bbv$ is present, we can study it pointwise at each $x\in M$. In a neighborhood of $x$, we can choose coordinates such that the metric components $g_{\mu\nu}(x)$ are equal to $\eta_{\mu\nu}$. The metric nondegeneracy condition of Definition~\ref{d:metnondeg} implies in turn that, up to an internal gauge transformation, we have $e_\mu^a(x)=\delta_\mu^a$. Applying this internal gauge transformation also to $\bbv$ and observing that the auxiliary Lagrangian is gauge invariant, we are left with the Lagrangian function\footnote{The Lagrangian density $\calL\mathrm{aux}$ is $L_\mathrm{aux}\,d^4x$.}
\[
L_\mathrm{aux} = \frac12\epsilon_{rstl}\epsilon^{ijk} \delta_n^r\delta_i^s[\bbv,\bbv]_{jk}^{tl}
= \frac12\epsilon_{itl}\epsilon^{ijk}[\bbv,\bbv]_{jk}^{tl}=[\bbv,\bbv]_{jk}^{jk},
\]
where everything is computed at the point $x$.
Recall that
\[
[\bbv,\bbv]_{jk}^{tl}= v_{[j}^{ta} v_{k]}^{bl} \eta_{ab},
\]
so
\[
L_\text{aux}=\psi-\phi
\]
with
\[
\phi = v_j^{ja} v_k^{kb} \eta_{ab}\qquad\text{and}\qquad\psi  = v_k^{ja} v_j^{kb} \eta_{ab}.
\]
We set $y_j^a\coloneqq  v_j^{ja}$ (no sum over $j$). Then
\[
\phi = \|y_1+y_2+y_3\|^2= \|y_1\|^2 + \|y_2\|^2 + \|y_3\|^2 
+ 2((y_1,y_2)+(y_1,y_3)+(y_2,y_3)),
\]
where we have denoted by $(\ ,\ )$ the inner product defined by $\eta$ and have written $\|y_j\|^2\coloneqq(y_j,y_j)$.
Next we compute $\psi$. We split the sum over $j$ and $k$ into one with $j=k$ and one with $j\not= k$.
We then have $\psi=\psi_=+\psi_{\not=}$ with 
\[
\psi_= = \|y_1\|^2 + \|y_2\|^2 + \|y_3\|^2.
\]
To compute $\psi_{\not=}$ more easily, we use \eqref{e:ev}, which for $e_\mu^a=\delta_\mu^a$ yields, as computed in \cite[Appendix A]{CCS2020}, $v_i^{a4}=0$, for all $i,a=1,2,3$, as well as the relations
\begin{equation}\label{e:vrelations}
v_3^{32} = -v_1^{12},\qquad v_2^{23} = -v_1^{13},\qquad v_3^{31}=-v_2^{21}.
\end{equation}
We may then choose the following $6$ independent components of $\bbv$: $v_1^{12}$, $v_1^{13}$, $v_1^{23}$, $v_2^{12}$, $v_2^{13}$, $v_3^{12}$.

Simply using $v_i^{a4}=0$, the antisymmetry of $v$ in the upper indices, and the fact that $(\eta_{ij})_{i,j=1,2,3}$ is the Euclidean metric, we get
\[
\psi_{\not=} = v_1^{23}v_2^{13} + v_1^{32} v_3^{12} + v_2^{31}v_3^{21}
= v_1^{23}v_2^{13} - v_1^{23} v_3^{12} + v_2^{13}v_3^{12}.
\]
Finally, we have
\[
L_\text{aux} = xy-xz+yz
- 2((y_1,y_2)+(y_1,y_3)+(y_2,y_3)),
\]
where have set $x=v_1^{23}$, $y=v_2^{13}$, and $z=v_3^{12}$. 

Using again $v_i^{a4}=0$, we see that $(y_i,y_j)$ is equal to the Euclidean inner product $(u_i,u_j)$
 of the $3$\ndash vectors $u_i$ given by  $u_i^j=v_i^{ij}$ with $i,j=1,2,3$ and no sum over $i$. 
Explicitly,
\[
u_1 = (0,v_1^{12},v_1^{13}), \qquad u_2 = (-v_2^{12},0,v_2^{23}),\qquad u_3=(-v_3^{13},-v_3^{23},0).
\]
Using the relations \eqref{e:vrelations}, we can rewrite them as
\[
u_1 = (0,v_1^{12},v_1^{13}), \qquad u_2 = (-v_2^{12},0,-v_2^{13}),\qquad u_3=(v_2^{12},-v_1^{12},0).
\]
Therefore,
\[
(u_1,u_2)=-(v_1^{13})^2,\quad (u_1,u_3)=-(v_1^{12})^2,\quad (u_2,u_3)=-(v_2^{12})^2.
\]
Setting $\alpha=v_1^{13}$, $\beta=v_1^{12}$, and $\gamma=v_2^{12}$, we finally have
\begin{equation}\label{e:Laux}
L_\text{aux} = xy-xz+yz
- 2(\alpha^2+\beta^2+\gamma^2),
\end{equation}
which is a nondegenerate quadratic form in the six independent variables $x,y,z,\alpha,\beta,\gamma$.
\end{proof}

\subsection{The Hamiltonian action}
Thanks to Lemma~\ref{l:auxnd}, $\bbv$ can be viewed as an auxiliary field which is set to zero by its Euler--Lagrange equations. 

Therefore, assuming metric nondegeneracy, 4D gravity (Palatini--Cartan or, equivalently, Einstein--Hilbert) is classically equivalent to the theory described by the action $S_\text{Ham}$ on the fields $(\mu,Z,w,\bbe,\bbomega)$.

We now proceed to examine this theory. The only time derivative, $\de_\mathrm{n}$, appears in the term
$\frac12 \bbe^2\de_\mathrm{n}\bbomega$. This implies that
$(\bbe,\bbomega)$ are the only dynamical fields, whereas $(\mu,Z,w)$ are Lagrange multipliers. 

Moreover, the Hamiltonian analysis in \cite{CS2019,CCS2020} shows that $\frac12 \bbe^2\de_\mathrm{n}\bbomega$ is indeed the 
analogue of the $p\dot q$ term in Hamiltonian mechanics, in the sense that, at fixed $t\in I$,
\begin{equation}\label{e:alphaSigma}
\alpha^\Sigma \coloneqq \int_\Sigma \frac12 \bbe^2\delta \bbomega
\end{equation}
is a potential for a symplectic form $\omega^\Sigma = \delta\alpha^\Sigma$. (Here $\delta$ denotes the exterior detivative on the space of fields $(\bbe,\bbomega)$ on $\Sigma$.)

The Lagrange multipliers $(\mu,Z,w)$ impose the following constraints
\begin{align}
\bbe\,\ddd_\bbomega\bbe &= 0,\\
\epsilon_\mathrm{n}\left(\bbe\, \uF_{\bbomega} + \frac\Lambda{3!}\bbe^3\right) &= 0,\\
\bbe_i \left(\bbe\, \uF_{\bbomega} + \frac\Lambda{3!}\bbe^3\right) &= 0.
\end{align}
The first, together with the structural constraint \eqref{e:struct}, is the torsion-free condition. The second and the third are known has the Hamiltonian and the momentum constraints, respectively. It is shown in \cite{CCS2020} that these three constraints are first class. Moreover, the first generates gauge transformations, the second diffeomorphisms transversal to $\Sigma$, and the third diffeomorphism along $\Sigma$.

Finally, there is no further term, which means that the Hamiltonian of the system vanishes, so there is no time evolution. This is of course consistent with general relativity's invariance under diffeomorphisms. 

To have a more interesting situation, we have to allow $\Sigma$ to have a boundary (possibly at infinity), which we consider in Section~\ref{s:boundaries}.

\subsection{Initial and final conditions}\label{s:ifcond}
The $1$\ndash form $\alpha^\Sigma$ in \eqref{e:alphaSigma} can in particular be considered at $\Sigma\times\{t_0\}$ and $\Sigma\times\{t_1\}$ and corresponds to the boundary terms appearing when computing the variation of the Hamiltonian action $S_\mathrm{Ham}$.

 If we use this form, the only initial and final conditions we may choose correspond to fixing the value of $\bbomega$. This is  unsatisfactory both physically, for it would be more natural to fix the coframe (hence the metric), and mathematically, for $\bbomega$ must satisfy the structural constraint \eqref{e:struct}, which depends on $\bbe$. 
 
 In order to fix instead $\bbe$ at the boundary, we have to change the $1$\ndash form $\alpha^\Sigma$ by an exact term $\delta f^\Sigma$. The problem is that $\bbomega$ is not a well-defined $1$\ndash form, for it is a connection. We can however write $\bbomega= \bbomega_0 + \bbomega'$, which we regard as a change of variables $\bomega\mapsto\bomega'$, where $\bbomega_0$ is a reference connection. 
Since $\bbomega_0$ is fixed, we have $\delta\bbomega_0=0$ and therefore
$\alpha^\Sigma = \int_\Sigma \frac12 \bbe^2\delta \bbomega'$.
We can now add to $S_\mathrm{Ham}$ (at $t=t_0$ and $t=t_1$, with appropriate signs) 
the term\footnote{This is the general way of modifying a theory buy a boundary term in order to be able to pass from Neumann to Dirichlet boundary conditions. For PC, see \cite[equation (4.7)]{Corichi_2016} and references therein.}  
\[
f^\Sigma=\int_\Sigma{\frac12\bbe^2\bbomega'}.
\]
This modifies the action,  which we now denote by $S_\mathrm{Ham}'$, in such a way that, when taking variations, the boundary terms are presently given by the $1$\ndash form 
\[
\alpha^{'\Sigma} \coloneqq \int_\Sigma  \bbomega'\bbe\delta\bbe
\]
at $t=t_0$ and at $t=t_1$, with appropriate signs. If $\bbe$ is fixed at the initial and final times, then the variation of $S_\mathrm{Ham}'$ has no boundary terms.\footnote{In principle, it is possible to add the term $f^\Sigma$ at only one of the two boundary components. In this case, at that boundary component we will fix $\bbe$, while at the other we will have to fix $\bbomega'$.}

\begin{Rem}\label{r:o00}
If we pick $\bbomega_0$ satisfying $\de_\mathrm{n}\bbomega_0=0$, then the modified Hamiltonian action  becomes, after integrating by parts the term $\frac12 \bbe^2\de_\mathrm{n}\bbomega$,
\begin{multline*}
S_\mathrm{Ham}'[\mu,Z,w,\bbe,\bbomega']\\= \int_M \left(
\bbomega'
 \bbe\de_\mathrm{n}\bbe
+ (\mu\epsilon_\mathrm{n}+\iota_Z\bbe) \left(\bbe\, \uF_{\bbomega_0 + \bbomega'} + \frac\Lambda{3!}\bbe^3\right)
+ w \bbe\,\ddd_{\bbomega_0 + \bbomega'}\bbe
\right)\dd t,
\end{multline*}
\end{Rem}

\section{Boundaries}\label{s:boundaries}
We now allow $\Sigma$ to have boundary (this makes $M=\Sigma\times[t_0,t_1]$ into a manifold with corners).
The only change in the above is that in the proof of Proposition~\ref{t:Sauxham}, equation \eqref{e:bypartsSigmaclosed}
has to be replaced by
\[
\int_{\de\Sigma}\bbe^2\omega_\mathrm{n}
 = \int_\Sigma \ddd(\bbe^2\omega_\mathrm{n}) = 
\int_\Sigma \ddd_\bbomh\bbe^2\,\omega_\mathrm{n}
+ \int_\Sigma \bbe^2\,\ddd_\bbomh\omega_\mathrm{n}.
\]
The result is that $S_\mathrm{Ham}$ 
gets the boundary correction
\[
S_\mathrm{b}=\int_{\de\Sigma\times[t_i,t_f]}\frac12\bbe^2\omega_\mathrm{n}\,\dd t.\footnote{It is worth noting that, as a function of $\be$, the boundary term $S_\mathrm{b}$ is the classical part of one of the generators of the corner $P_\infty$\ndash algebra 
discussed in \cite[Section 6]{CC24}, where $\omega_\mathrm{n}$ plays the role of a parameter.}
\]
As in Section~\ref{s:ifcond}, we should however be careful with boundary conditions, which, as things stand, can only fix $\omega_n$. 

If we are interested in the more physical boundary conditions that fix $\be$ on the boundary, we have to modify the original action $S$ of equation \eqref{e:S} to $S'=S+f^\de$ with
\[
f^\de=-\int_{\de M} \frac12 \be^2(\bomega-\bbomega_0),
\]
where $\bbomega_0$ is a reference connection as in Section~\ref{s:ifcond}. Note that $f^\de$ already includes the terms $f^\Sigma$ discussed there corresponding to the boundary portion $\Sigma\times\de[t_i,t_f]$. The term corresponding to the rest of the boundary is 
\[
-\int_{\de \Sigma\times[t_i,t_f]} \frac12 \be^2(\bomega-\bbomega_0) 
=-\int_{\de \Sigma\times[t_i,t_f]} \frac12 \bbe^2\omega_\mathrm{n}\,\dd t
+ \int_{\de \Sigma\times[t_i,t_f]}e_n\bbe\,\bbomega'\,\dd t.
\]

Note that $S_\mathrm{aux}$ gets no boundary terms, so the discussion in Section~\ref{s:aux} is unchanged, and we still find that the auxiliary field $\bbv$ is set to zero by its own EL equations.
Therefore, we get that the action $S'$ is equivalent to
\begin{multline*}
S_\mathrm{Ham}'[\mu,Z,w,\bbe,\bbomega']\\= \int_M \left(
\bbomega'
 \bbe\de_\mathrm{n}\bbe
+ (\mu\epsilon_\mathrm{n}+\iota_Z\bbe) \left(\bbe\, \uF_{\bbomega_0 + \bbomega'} + \frac\Lambda{3!}\bbe^3\right)
+ w \bbe\,\ddd_{\bbomega_0 + \bbomega'}\bbe
\right)\dd t\\
+ \int_{\de \Sigma\times[t_i,t_f]}e_n\bbe\,\bbomega'\,\dd t,
\end{multline*}
where, as in Remark~\ref{r:o00}, we have assumed $\de_\mathrm{n}\bbomega_0=0$ and have integrated by parts in the time direction.

Note that in the boundary term the fields $e_\mathrm{n}$ and $\bbe$ are fixed by the boundary conditions. When inserting into $S_\mathrm{Ham}'$ the solution to the Euler--Lagrange equations---so computing the Hamilton--Jacobi action---we may interpret the term
\[
\mathbf{M}\coloneqq
\int_{\de \Sigma}e_n\bbe\,\bbomega'_\mathrm{sol},
\]
as the energy (or mass) of the solution.

In particular, if we pick $e_\mathrm{n}=\epsilon_\mathrm{n}$ and $\bbe$ to correspond to Schwarzschild space, then $\mathbf{M}$ is the ADM \cite{ADM1959} mass.\footnote{For an overview of the ADM mass in the PC formalism, see \cite{Ashtekar2008,OS2020bis}.} Note that our discussion parallels that of \cite[Section 4.3]{P04} and references therein.

\subsection{The Schwarzschild mass}
We may test the formula for the ADM mass on the Schwarzschild solution. On the exterior of the black hole, space--time is $M=\Sigma\times\bbR$, with $\Sigma=(2Gm,+\infty)\times S^2$, with $G$ the gravitational constant and $m$ the black-hole mass. A choice of coframe corresponding to the Schwarzschild solution is
\[
\begin{split}
	\be^0 &=\frac1{\sqrt G_0} \sqrt{1-\frac{2Gm}{r}}\,\dd t,\\
	\be^1 &= \frac1{\sqrt G_0} \frac1{\sqrt{1-\frac{2Gm}{r}}}\,\dd r,\\
	\be^2 &= \frac1{\sqrt G_0}\,r\,\dd\theta, \\
	\be^3 &= \frac1{\sqrt G_0}\,r\sin\theta\,\dd\phi,
\end{split}
\]
with $r$ the coordinate on $(2Gm,+\infty)$, $\theta$ and $\phi$ the polar coordinates on $S^2$, and $G_0$ the constant introduced in footnote~\ref{f:G0}. The corresponding torsion-free connection has the following nonvanishing components
\[
\begin{split}
	\bomega^{12} &= \sqrt{1-\frac{2Gm}{r}} \,\dd\theta,\\
	\bomega^{13} &= \sqrt{1-\frac{2Gm}{r}}\, \sin\theta\,\dd\phi,\\
	\bomega^{23} &= \cos\theta\,\dd\phi.
\end{split}
\]
As reference connection $\bbomega_0$ we take the one corresponding to Minkowski space, i.e., the above for $m=0$. We then have
\begin{multline*}
	e_n\bbe(\bbomega-\bbomega_0)= e_n^0(\be^2(\bomega-\bbomega_0)^{31}+ \be^3(\bomega-\bbomega_0)^{12})\\=\frac{2r}{G_0}\sqrt{1-\frac{2Gm}{r}}\left(1-\sqrt{1-\frac{2Gm}{r}}
\right)\sin\theta\,\dd\theta\,\dd\phi.
\end{multline*}
Therefore,
\[
M(r)\coloneqq \int_{\{r\}\times S^2} e_n\bbe(\bbomega-\bomega_0) =
\frac{8\pi r}{G_0}\sqrt{1-\frac{2Gm}{r}}\left(1-\sqrt{1-\frac{2Gm}{r}}
\right).
\]
Finally, we get that the ADM mass corresponding to $\Sigma$ is
\[
\mathbf{M} = \lim_{r\to+\infty} M(r)- \lim_{r\to 2GM^+} M(r)=\frac{8\pi G}{G_0} m.
\]
This fixes $G_0=8\pi G$.

\section{Euclidean gravity}
All the results in this paper (i.e., all classical considerations)\footnote{See Section~\ref{s:qlor} for the problems at the quantum level.} have a straightforward generalization in the case of Euclidean gravity, where the metric has positive signature. Here are the minor differences:
\begin{enumerate}
\item The ``fake tangent bundle'' $\calV$ is now introduced as the associated bundle $PM\times_G (V,\eta)$ 
to the orthgonal frame bundle $PM$ for a reference Riemannian metric. Here $(V,\eta)$ is the Euclidean space and $G=SO(4)$.
\item The reference $(0,1)$\ndash form $\epsilon_\text{n}$ is normalized by $(\epsilon_\text{n},\epsilon_\text{n})=1$.
\item Metric nondegenracy as in Definition~\ref{d:metnondeg} is now automatical.
\end{enumerate}
On the other hand, all the computations proceed exactly as in the Lorentzian case. In particular, the auxiliary Lagrangian $L_\mathrm{aux}$ of \eqref{e:Laux} is exactly the same as in the Lorentzian case.

\section{Towards quantization}\label{s:qlor}
The first remark one can make is that in the functional integral for PC gravity we can first integrate out the auxiliary field $\bbv$.\footnote{\label{fn:qLor}Note that this integration has to be done for each fixed $(e_\mathrm{n},\bbe)$. It is a fiber integration, as $\bbv$ must satisfy the condition $\bbe\bbv=0$.} As the quadratic form $L_\mathrm{aux}$ in \eqref{e:Laux} is nondegenerate, the functional integral $\int \EE^{\frac\ii\hbar S_\mathrm{aux}}\,D\bbv$ is well\ndash defined (in the Lorentzian case).\footnote{In the Riemannian case, we should consider the functional integral $\int \EE^{-\frac1\hbar S_\mathrm{aux}}\,D\bbv$, which is not well\ndash defined, as the quadratic form in \eqref{e:Laux} is not positive\ndash definite.}

To be more precise, one first has to take care of the symmetries of the PC action. If we do this in the BV formalism \cite{BV1}, see \cite{CSPCH}, we can then extend the splitting discussed in Section~\ref{s:cyl} to all the BV fields. At this point we can view the integration of the auxiliary field $\bbv$ as a particular application of a BV pushforward \cite[Section 2]{CMR2015}. In the companion paper \cite{CC25a}, we perform these steps and show this way that the PC theory on a cylinder is equivalent to the AKSZ \cite{AKSZ}  theory associated to its boundary BFV data constructed in \cite[Section 4]{CCS2020b}. (In \cite{CC25a}, we will actually allow more general nondegeneracy conditions than in Definition~\ref{d:metnondeg} and will also consider the higher\ndash dimensional case.)

The second step towards quantization would be to compute the Hamilton--Jacobi action of theory presented in this paper, following \cite{CMW22}, and to use the corresponding BV--AKSZ formulation to start computing the perturbative functional integral, following \cite{CMW23}. The renormalization issues should start appearing at this point.

\begin{refcontext}[sorting=nyt]
    \printbibliography[] 
\end{refcontext}

\end{document}